	\newcommand{\ncd}{\newcommand}
	\ncd{\mrm}{\mathrm}
	\ncd{\red}{\color{red}}
	\ncd{\blue}{\color{blue}}
	\ncd{\beq}{\begin{equation}}
	\ncd{\eeq}{\end{equation}}
	\newtheorem{definition}{Definition}[section]
	\newtheorem{prop}{Proposition}[section]
	\def\d{{\rm d}}
	\newcommand{\figref}[1]{Fig.~\ref{#1}}
	\newcommand{\figsref}[1]{Figs.~\ref{#1}}
	\newcommand{\eqeqref}[1]{Eq.~\eqref{#1}}
	\newcommand{\eqsref}[1]{Eqs.~\eqref{#1}}
	\newcommand{\secref}[1]{Section~\ref{#1}}
\begin{document}

\begin{frontmatter}
\title{The zeroth law in quasi-homogeneous thermodynamics and black holes}

\author[iimas]{Alessandro Bravetti\fnref{corr}}
\ead{alessandro.bravetti@iimas.unam.mx}
\author[icn]{Christine Gruber}
\ead{christine.gruber@correo.nucleares.unam.mx}
\author[conacyt]{Cesar S. Lopez-Monsalvo}
\ead{cslopezmo@conacyt.mx}
\author[icn]{Francisco Nettel}
\ead{fnettel@nucleares.unam.mx}

\fntext[corr]{Corresponding author}

\address[iimas]{Instituto de Investigaciones en Matem\'aticas Aplicadas y en Sistemas,\\
		Universidad Nacional Aut\'onoma de M\'exico, Ciudad Universitaria, Ciudad de M\'exico 
		04510, Mexico A.P. 70-543, 04510 Ciudad de M\'exico, M\'exico}
\address[icn]{Instituto de Ciencias Nucleares, Universidad Nacional Aut\'onoma de M\'exico,\\	
		A.P. 70-543, 04510 Ciudad de M\'exico, M\'exico}
\address[conacyt]{Conacyt-Universidad Aut\'onoma Metropolitana Azcapotzalco
		Avenida San Pablo Xalpa 180, Azcapotzalco, Reynosa Tamaulipas, 02200 Ciudad de 
		M\'exico, M\'exico}

\begin{abstract}
Motivated by black holes thermodynamics, 
we consider the zeroth law of thermodynamics for systems whose entropy is a quasi-homogeneous function 
of the extensive variables. We show that the generalized Gibbs-Duhem identity and the Maxwell construction 
for phase coexistence based on the standard zeroth law are incompatible in this case. We argue that 
the generalized Gibbs-Duhem identity suggests a revision of the zeroth law which in turns permits to reconsider 
Maxwell's construction in analogy with the standard case. The physical feasibility of our proposal is considered in the particular case of black holes.\end{abstract}

\begin{keyword}
Quasi-homogeneous thermodynamics, Zeroth Law, Black hole thermodynamics, Gibbs-Duhem 
\end{keyword}

\end{frontmatter}

\section{Introduction}

The thermodynamics of black holes contains several peculiarities in contrast to standard thermodynamics. 
One example is the different scaling behaviour when rescaling the thermodynamic variables. This can be directly verified noting that the entropy of a black hole is --  in general -- a quasi-homogeneous function of the extensive thermodynamic quantities describing the system~\cite{dolan2015black}, 
and its scaling behaviour is dictated by the Smarr relation.
Such systems are generically called \emph{quasi-homogeneous}.
As a consequence, it is usually recognized that using the formalism of {homogeneous} thermodynamics in the case of black 
holes is not fully justified and that a modification of the thermodynamic laws for systems with 
quasi-homogeneous entropy is called for~\cite{dolan2015black}.

{It has been established that in systems  where } entropy and energy are not additive the standard way to define equilibrium 
has to be adjusted  and, in such case,  the thermodynamic temperature  may not be the correct parameter to be equated at 
equilibrium~\cite{oppenheim2003thermodynamics,abe2001general,
ramirez2008violation,ramirez2008systems,biro2011zeroth,lenzi2012extensive,haggard2013death,velazquez2016remarks}.
In spite of this, {it has been repeatedly argued in favour of}  
the existence of first order phase transitions -- i.e.,~coexistence processes -- {within the framework of black hole thermodynamics}. 
{Such arguments are} based on the analogy with the van der Waals (vdW) phase diagram and use the Maxwell equal 
area law to find the coexistence curve \emph{as if} the system was homogeneous (see 
e.g.~\cite{dolan2015black,2012Dola,kubizvnak2012p,spallucci2013maxwell,spallucci2013maxwell?s,dolan2014vacuum,lan2015note,wei2015clapeyron,wei2015insight,mo2015coexistence,kubizvnak2017black} 
and the references therein). 

In this work we 
consider systems whose entropy is a quasi-homogeneous function of the extensive variables 
and show that Maxwell's equal area law -- based on the definition of thermodynamic equilibrium 
for homogeneous systems (cf.~\cite{callen2006thermodynamics} and the discussion in Section 4.3 
in~\cite{Bravetti2015377}) --  is inconsistent with the generalized Gibbs-Duhem (GGD) identity that must 
hold in such cases~\cite{belgiorno2003quasi,belgiorno2011general}. 
{We show that this situation can be remedied introducing a new set of the variables 
defining equilibrium. Based on these generalized variables, we propose a definition of thermodynamic equilibrium {originating} from the 
GGD identity and we demonstrate that such revision is essential {in} Maxwell's construction 
for phase coexistence. It is worth mentioning that our \emph{generalized zeroth law} reduces to the standard definition for homogeneous systems of degree one.}

To illustrate our proposal we discuss two relevant cases: {on the one hand}, we show that for the Schwarzschild 
black hole {the \emph{new} temperature characterizing equilibrium is constant, i.e. it does not depend on its mass $M$. This} coincides (up to a constant factor) with 
the result in~\cite{czinner2015black}, 
where such parameter is obtained using a generalized zeroth law for non-extensive statistical mechanics developed 
in~\cite{biro2011zeroth}. This proves that, at least in the Schwarzschild case, there is a consistency between 
different approaches. On the other hand, we consider the first order phase transition in the Kerr--Anti de Sitter (Kerr--AdS) family of black holes and show that the Maxwell construction 
as applied in the literature leads to a violation of the GGD. Using 
the new generalized intensive parameters and according to our definition of thermodynamic equilibrium, such transition seems to 
disappear. Given the importance of this example in the context of the AdS/CFT correspondence, we believe that this 
can be relevant for future investigations.

This paper is structured as follows. In \secref{sec:mathincons} we review the thermodynamics of 
quasi-homogeneous systems as developed in~\cite{belgiorno2003quasi,belgiorno2011general}. In~\secref{sec:GZL} we 
point out the aforementioned mathematical inconsistency between Maxwell's construction based on the 
standard zeroth law of thermodynamics and the Gibbs-Duhem relation in the case of quasi-homogeneous entropy, 
and continue by proposing a generalized form of the zeroth law, 
which is consistent with the corresponding GGD relation. To illustrate the new form of the zeroth law, we consider the examples of 
Schwarzschild and Kerr--AdS black holes in \secref{sec:ex}, before we conclude in \secref{sec:conc}. 
Throughout this work we use Planck units, in which $c=G=\hbar=k_{\rm B}=1$.

\section{Quasi-homogeneous thermodynamics}
\label{sec:mathincons}
In this section we briefly review some results of the thermodynamics of quasi-homogeneous systems
obtained in~\cite{belgiorno2003quasi,belgiorno2011general}. 
Let us start by recalling some definitions. 
Unless otherwise stated, we will not use Einstein's sum convention.

\begin{definition}[Quasi-homogeneous function] \label{def:qhf}
Let $r,\lambda\in \mathbb{R}$, $\lambda\neq 0$ and ${\bf \beta} = \left(\beta_1,...,\beta_n\right)\in \mathbb{R}^n$. 
A function $w$ of a set of variables $\left\{q^i\right\}_{i=1}^n$ is said to be \emph{quasi-homogeneous of degree 
$r$ and type ${\bf \beta}$} if
	\beq \label{eq:Gquasihom}
	  w (\lambda^{\beta_1} q^1,..., \lambda^{\beta_n} q^n) = \lambda^{r} w(q^1,...,q^n) \,.
	\eeq
\end{definition}
The particular case where $\beta_i=1$ for every value of $i$ yields the standard scaling relation of homogeneous 
functions of degree $r$, i.e.
	\beq
	\label{eq.whom}
	w(\lambda q^1,...,\lambda q^n) = \lambda^r w(q^1,...,q^n) \,.
	\eeq
In the following we will use $S$ instead of $w$, because the function of interest in thermodynamics is the entropy. 
The variables $\left\{q^i\right\}_{i=1}^n$ are the extensive variables of the system, such as internal energy $U$, 
volume $V$ or number of particles $N$.
{In standard thermodynamics} of extensive systems the entropy is a homogeneous function of degree 
one of the extensive variables, i.e., 
	\beq\label{homS}
	S(\lambda U,\lambda V,\lambda N) = \lambda S(U,V,N) \,, 
	\eeq
while in black holes thermodynamics the entropy is a quasi-homogeneous function as in Def.\,\ref{def:qhf}.

\begin{prop}
Let $S=S(q^1,...,q^n)$ be a quasi-homogeneous function of  degree $r$ and type $\beta$. Then, the conjugate 
variables to the $q^i$, defined by   
	\beq
		\label{def.p0}
	p_i\left(q^j\right) \equiv \frac{\partial}{\partial q^i} S\left(q^j\right) \,,
	\eeq 
are quasi-homogeneous functions of degree $r-\beta_i$ for every value of $i$.
\end{prop}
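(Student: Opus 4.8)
The plan is to obtain the scaling law for $p_i$ directly from the defining identity \eqeqref{eq:Gquasihom} by differentiating it with respect to one of the extensive variables and letting the chain rule expose the relevant factor. Concretely, I would hold $\lambda$ and all $q^j$ with $j\neq i$ fixed and apply $\partial/\partial q^i$ to both sides of
\[
  S(\lambda^{\beta_1} q^1,\dots,\lambda^{\beta_n} q^n) = \lambda^r\, S(q^1,\dots,q^n)\,.
\]
On the right-hand side $\lambda^r$ is merely a constant multiplier, so the derivative is $\lambda^r\, p_i(q^1,\dots,q^n)$ by the very definition \eqref{def.p0}. On the left-hand side the $i$-th argument of $S$ is $\lambda^{\beta_i} q^i$, so the chain rule produces the single factor $\partial(\lambda^{\beta_i}q^i)/\partial q^i = \lambda^{\beta_i}$ multiplying the $i$-th slot derivative of $S$ evaluated at the rescaled point, i.e.\ $\lambda^{\beta_i}\, p_i(\lambda^{\beta_1}q^1,\dots,\lambda^{\beta_n}q^n)$.

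Equating the two expressions and cancelling $\lambda^{\beta_i}$ — legitimate since $\lambda\neq 0$ — then yields
\[
  p_i(\lambda^{\beta_1}q^1,\dots,\lambda^{\beta_n}q^n) = \lambda^{r-\beta_i}\, p_i(q^1,\dots,q^n)\,,
\]
which is precisely the statement that $p_i$ is quasi-homogeneous of degree $r-\beta_i$ and of the same type $\beta$, for each fixed $i$.

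There is essentially no hard step here: the argument is a single application of the chain rule followed by a cancellation. The only points demanding care are bookkeeping ones — first, differentiating with respect to $q^i$ rather than with respect to the rescaled argument, so that exactly one factor $\lambda^{\beta_i}$ (and no others) appears, the identity being differentiated slot by slot; and second, observing that the type $\beta$ is inherited unchanged while only the degree is lowered, and lowered by the $i$-dependent amount $\beta_i$. I would also note that the same conclusion can be reached from the infinitesimal (Euler) form of quasi-homogeneity obtained by differentiating \eqref{eq:Gquasihom} in $\lambda$ at $\lambda=1$, but the direct slot-wise differentiation above is the most transparent route and immediately produces the index-dependent degree $r-\beta_i$.
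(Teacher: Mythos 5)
Your proof is correct and is essentially the paper's own argument: both amount to differentiating the scaling identity for $S$ with respect to the $i$-th variable and using the chain rule to extract the single factor $\lambda^{\beta_i}$, yielding $p_i(\lambda^{\beta_j}q^j)=\lambda^{r-\beta_i}p_i(q^j)$. The paper merely writes the same manipulation in the opposite direction, starting from $p_i$ at the rescaled point and converting $\partial/\partial(\lambda^{\beta_i}q^i)$ into $\lambda^{-\beta_i}\,\partial/\partial q^i$.
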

\begin{proof}
	
	\begin{align}
	\label{eq.p1}
		p_i\left(\lambda^{\beta_j} q^j\right) = \frac{\partial}{\partial (\lambda^{\beta_i} q^i)} 
			S\left(\lambda^{\beta_j} q^j\right)
			& = \frac{1}{\lambda^{\beta_i}} \frac{\partial}{\partial q^i}\big[\lambda^r S\left(q^j\right)\big] 
					\nonumber \\
			& = \lambda^{r-\beta_i}\frac{\partial}{\partial q^i} S(q^j) \,.
	\end{align} 
Therefore
	\beq
	\label{hombeta_a-1}
	p_i\left( \lambda^{\beta_j} q^j \right) =\lambda^{r-\beta_i} p_i (q^j) \,.
	\eeq
\end{proof}

Note that if $S$ is homogeneous of degree $r=1$  [cf.~equation \eqref{homS} above], 
then the conjugate variables $p_i$ are homogeneous functions of degree 0, i.e.~$p_i(\lambda q^j) = p_i(q^j)$, 
i.e., they do not change when the system is re-scaled. Only in this case, the conjugate variables 
are \emph{intensive} and we recover the usual thermodynamic quantities, e.g.~$1/T$, $p/T$, 
$\mu/T$. In all other cases we shall refer to the conjugate variables $p_i$ as the \emph{would-be intensive} 
quantities, as in~\cite{belgiorno2003quasi,belgiorno2011general}. 

\begin{prop}[Euler's Theorem]
\label{teo.eu}
Let $S=S(q^1,...,q^n)$ be a quasi-homogeneous function of  degree $r$ and type $\beta$. Then
	\beq
	\label{eu.teo}
	r S(q^j) = \sum_{i=1}^n \beta_i \big[q^i p_i(q^j) \big] \,.
	\eeq
\end{prop}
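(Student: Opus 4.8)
The plan is to prove Euler's theorem for quasi-homogeneous functions by differentiating the defining scaling relation \eqref{eq:Gquasihom} with respect to the scaling parameter $\lambda$ and then evaluating at $\lambda=1$. This is the standard trick for deriving Euler-type identities: the quasi-homogeneity condition holds for \emph{all} nonzero $\lambda$, so both sides are functions of $\lambda$ that agree identically, and hence their $\lambda$-derivatives must agree as well. The left-hand side carries all the $\lambda$-dependence inside the arguments of $S$, while the right-hand side has the explicit factor $\lambda^r$, so differentiating converts the abstract scaling statement into a concrete algebraic identity relating $S$, its partial derivatives, and the type exponents $\beta_i$.

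First I would write $S\left(\lambda^{\beta_1} q^1,\dots,\lambda^{\beta_n} q^n\right) = \lambda^r S(q^1,\dots,q^n)$ and apply $\frac{\d}{\d \lambda}$ to both sides. On the right-hand side this is immediate: $\frac{\d}{\d \lambda}\big[\lambda^r S(q^j)\big] = r\lambda^{r-1} S(q^j)$. On the left-hand side I would use the chain rule, treating each scaled argument $Q^i := \lambda^{\beta_i} q^i$ as an intermediate variable, so that
\begin{equation}
\frac{\d}{\d\lambda} S\left(\lambda^{\beta_j} q^j\right) = \sum_{i=1}^n \frac{\partial S}{\partial Q^i}\,\frac{\d Q^i}{\d\lambda} = \sum_{i=1}^n \frac{\partial S}{\partial Q^i}\,\beta_i\, \lambda^{\beta_i - 1} q^i \,.
\end{equation}
Here $\frac{\partial S}{\partial Q^i}$ is precisely the would-be intensive variable $p_i$ evaluated at the scaled arguments, i.e.\ $p_i\left(\lambda^{\beta_j} q^j\right)$.

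Next I would specialize to $\lambda=1$, so that every scaled argument collapses to $q^i$ and every power of $\lambda$ becomes unity. The right-hand side reduces to $r\,S(q^j)$, while the left-hand side becomes $\sum_{i=1}^n \beta_i\, q^i\, p_i(q^j)$, using the definition \eqref{def.p0} of $p_i$. Equating the two yields
\begin{equation}
r\, S(q^j) = \sum_{i=1}^n \beta_i \big[q^i p_i(q^j)\big] \,,
\end{equation}
which is exactly \eqref{eu.teo}. An equivalent route, should one prefer to avoid setting $\lambda=1$ at the end, is to invoke the homogeneity-of-derivatives result \eqref{hombeta_a-1} to rewrite $p_i\left(\lambda^{\beta_j}q^j\right) = \lambda^{r-\beta_i} p_i(q^j)$, substitute into the differentiated left-hand side, cancel the common factor $\lambda^{r-1}$ against the right-hand side, and read off the identity for all $\lambda$.

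I do not anticipate a serious obstacle here, since the argument is a direct differentiation. The only point requiring mild care is the bookkeeping of the chain rule across the $n$ scaled arguments and the exponents $\beta_i$ on the powers of $\lambda$ — in particular keeping straight that $\partial/\partial Q^i$ produces $p_i$ evaluated at the scaled point, not at $q^i$. Because the paper suppresses Einstein summation, I would also be careful to write the sum $\sum_{i=1}^n$ explicitly rather than rely on repeated-index conventions, matching the notation already fixed in the excerpt.
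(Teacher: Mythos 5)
Your proposal is correct and follows essentially the same approach as the paper: differentiate the scaling relation with respect to $\lambda$, apply the chain rule on the left, and compare with $r\lambda^{r-1}S(q^j)$ on the right. The paper takes your ``alternative route'' of using \eqref{hombeta_a-1} to cancel the common $\lambda^{r-1}$ rather than setting $\lambda=1$, but this is a cosmetic difference only.
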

\begin{proof}

Consider the derivative of $S(\lambda^{\beta_j} q^j)$ with respect to the scaling parameter $\lambda$. 
On the one hand, since $S$ is a quasi-homogeneous function of degree $r$ and type $\beta$, we have
	\begin{align}
	\label{teo.eu1}
		\frac{\partial}{\partial \lambda} S(\lambda^{\beta_j} q^j)	
		=\frac{\partial}{\partial \lambda} \big[\lambda^r S(q^j) \big] 
		 = r \lambda^{r-1} S(q^j) \,.
	\end{align}
On the other hand, a direct calculation yields
	\begin{align}
	\label{teo.eu2}
	  \frac{\partial}{\partial \lambda} S(\lambda^{\beta_j} q^j) 
		& = \sum_{i=1}^n \frac{\partial S(\lambda^{\beta_j} q^j)}{\partial (\lambda^{\beta_i} q^i)} 
			\frac{\partial (\lambda^{\beta_i} q^i)}{\partial \lambda} \nonumber \\
	  & = \sum_{i=1}^n \frac{\partial S(\lambda^{\beta_j} q^j)}{\partial (\lambda^{\beta_i} q^i)} 
			\left(\beta_i \lambda^{\beta_i-1} q^i \right) \nonumber\\
		& =  \sum_{i=1}^n \left(\beta_i \lambda^{r-1} q^i \right)  p_i(q^j),
	\end{align} 
where the last equality follows from Def.\,\eqref{def.p0} and \eqsref{eq.p1} and \eqref{hombeta_a-1}. 
Thus, combining the results of \eqref{teo.eu1} and \eqref{teo.eu2}, \eqeqref{eu.teo} is obtained. 
\end{proof} 

In standard thermodynamics the above result reduces to the well-known identity for the entropy, 
	\beq\label{eulers1}
		S=\frac{1}{T}U-\frac{p}{T}V+\frac{\mu}{T}N \,.
	\eeq

With Proposition~\ref{teo.eu}, we can write a GGD relation for the case of quasi-homogeneous thermodynamic systems.

\begin{prop}[Generalized Gibbs-Duhem identity]
Let $S(q^{1},\dots,q^{n})$ be a quasi-homogeneous function of  degree $r$ and type $\beta$ 
and let $\left\{p_i\right\}_{i=1}^n$ be the set of conjugate variables [cf.~equation~\eqref{def.p0}]. 
Then, 
	\beq\label{eq:GD-QH}
		\sum_{i=1}^n  \big[ \left(\beta_i - r \right) p_i(q^j) \d q^i+ \beta_i q^i \d p_i(q^j) \big] =0 \,.
	\eeq
\end{prop}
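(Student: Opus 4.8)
The plan is to obtain the GGD identity directly from Euler's theorem by differentiation, mirroring the derivation of the ordinary Gibbs--Duhem relation in the homogeneous setting. First I would take the exterior derivative of both sides of Euler's identity~\eqref{eu.teo}, treating the degree $r$ and the type exponents $\beta_i$ as fixed constants so that only $S$, the coordinates $q^i$, and the would-be intensive quantities $p_i$ carry differentials. This yields
\beq
r\,\d S = \sum_{i=1}^n \beta_i\big( p_i\,\d q^i + q^i\,\d p_i \big)\,.
\eeq

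The second ingredient I would invoke is the fundamental relation that follows immediately from the definition~\eqref{def.p0} of the conjugate variables, namely the first law in the form $\d S = \sum_{i=1}^n p_i\,\d q^i$. Substituting $r\,\d S = r\sum_i p_i\,\d q^i$ into the differentiated Euler identity and grouping the $\d q^i$ contributions gives
\beq
0 = \sum_{i=1}^n \big(\beta_i - r\big) p_i\,\d q^i + \sum_{i=1}^n \beta_i q^i\,\d p_i\,,
\eeq
which is exactly~\eqref{eq:GD-QH}.

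The argument is short and involves no serious analytic difficulty; the only delicate point is the bookkeeping of what is held fixed under differentiation. I must not differentiate $\beta_i$ or $r$, and I must apply the first law in its quasi-homogeneous form $\d S = \sum_i p_i\,\d q^i$ rather than any rescaled variant. As a consistency check, the homogeneous limit $\beta_i = 1$, $r = 1$ collapses the first sum to zero and reduces the identity to $\sum_i q^i\,\d p_i = 0$, i.e.~the familiar $U\,\d(1/T) - V\,\d(p/T) + N\,\d(\mu/T) = 0$, recovering the standard Gibbs--Duhem relation consistent with~\eqref{eulers1}.
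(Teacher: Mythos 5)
Your proposal is correct and follows essentially the same route as the paper: differentiate Euler's identity~\eqref{eu.teo}, replace $r\,\d S$ by $r\sum_i p_i\,\d q^i$ using the definition~\eqref{def.p0}, and collect terms. The consistency check in the homogeneous limit is a nice addition but the core argument is identical.
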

\begin{proof}
Since $S$ satisfies the hypothesis of Proposition \ref{teo.eu}, let us consider the differential of 
\eqref{eu.teo}, namely 
	\beq
	r \d S(q^j) = \sum_{i=1}^n\beta_i \d\big[q^i p_i(q^j) \big] \,.
	\eeq
The left hand side is simply
	\beq
	\label{gd.1}
		r \d S =r \sum_{i=1}^n \frac{\partial}{\partial q^i} S(q^j) \d q^i= r \sum_{i=1}^n p_i(q^j) \d q^i \,,
	\eeq
whereas the right hand side yields
	\beq
	\label{gd.2}
		\sum_{i=1}^n\beta_i \d\big[q^i p_i(q^j) \big] = 
		\sum_{i=1}^n \beta_i \big[q^i \d p_i(q^j) + p_i(q^j) \d q^i \big] \,.
	\eeq
Subtracting \eqref{gd.1} from \eqref{gd.2} and collecting the $\beta_i$ produces the desired result. 
\end{proof}

In the case where $S$ is homogeneous of degree $r$, equation 
\eqref{eq:GD-QH} reduces to 
	\beq\label{G-D2}
		(1-r) \sum_{i=1}^n p_i(q^j) \d q^i +  \sum_{i=1}^n q^i \d p_i(q^j) = 0 \,.
	\eeq
From this result it follows that in standard thermodynamics (with $r=1$), using the appropriate identifications of the variables, one obtains the Gibbs-Duhem relation
	\beq\label{GDstandard}
		U\d\left(\frac{1}{T}\right)-V\d\left(\frac{p}{T}\right)+N\d\left(\frac{\mu}{T}\right)=0\,,
	\eeq
which is a mathematical identity stating that the intensive quantities are not all independent 
in equilibrium~\cite{callen2006thermodynamics}.

\section{A mathematical inconsistency and its possible resolution}
\label{sec:GZL}
In this section we prove the mathematical inconsistency between the usual zeroth law 
of thermodynamics, the standard Maxwell construction for coexistence between different phases 
and the GGD identity, and provide a possible resolution through a 
redefinition of the equilibrium parameters.

We start from the crucial fact that in ordinary thermodynamics the Gibbs-Duhem identity 
\eqref{GDstandard} is mathematically consistent with Maxwell's law for phase coexistence.
Here, one considers a single system splitting into two different phases remaining at equilibrium, 
i.e.~sharing the same values of their intensive quantities, while the entropy and volume of the system 
change, causing a discontinuity in the extensive quantities and thus giving rise to a \emph{first order 
phase transition. }
Clearly in this case the definition of equilibrium between the phases in terms of equal values 
of the conjugate (intensive) quantities is consistent with~\eqref{GDstandard}.

From the above discussion on the role of the intensive variables in Maxwell's construction and its 
consistency with the Gibbs-Duhem relation \eqref{GDstandard}, it is evident why such consistency is 
lost in the case of quasi-homogeneous systems, where equation~\eqref{eq:GD-QH} holds. Indeed for the 
two phases to be at equilibrium, the zeroth law would predict that no change in any of the would-be intensive  
variables $p_{i}$ would happen, i.e.,~$\d p_{i}=0$ for all $i$. This implies that the second term 
in~\eqref{eq:GD-QH} vanishes identically. However, in general the first term in~\eqref{eq:GD-QH} is different 
from zero, thus {leading to} an inconsistency.
For instance, in the case of a homogeneous entropy of degree $r$, it follows from the first law 
$\d S = \sum_{i=1}^n p_i \d q^i$ that the first term of \eqref{G-D2} is proportional to the change in 
the entropy during the transition, and hence to the latent heat, which cannot be zero in a first order 
phase transition. 

This inconsistency leads to the two following possibilities: either one gives up the standard formulation 
of phase coexistence expressed by the Maxwell construction (at least in its usual form), or one has to {re-define} 
the conditions for equilibrium, i.e.,~the zeroth law. Due to the many indications arising from different 
perspectives pointing to the fact that the zeroth law needs to be revisited for systems with non-additive 
entropy and energy relations (see e.g.~\cite{oppenheim2003thermodynamics,abe2001general,
ramirez2008violation,ramirez2008systems,biro2011zeroth,lenzi2012extensive,haggard2013death,velazquez2016remarks}), 
we opt for the latter route.

From the analysis of the homogeneity of the first derivatives of $S$ -- see \eqref{hombeta_a-1} --  
let us propose the following
\begin{definition}[Generalized intensive variables] \label{def.GIV}
Let $S(q^{1},\dots,q^{n})$ be a quasi-homogeneous function of  degree $r$ and type $\beta$ and let
$\left\{p_i\right\}_{1}^{n}$ be the set of conjugate variables. Assume that $\beta_i\neq 0$ for every $i$. 
The quantities 
	\beq \label{eq:GenIntQuantQH}
	  {\tilde{p}_i(q^j)} \equiv \left[\left(q^i\right)^{\beta_i-r}\right]^{1/\beta_i} p_i (q^j) \,.
	\eeq
are called the \emph{generalized intensive variables}.
\end{definition}

Indeed, these variables reduce to~\eqref{def.p0} when $S$ is homogeneous of degree 1. Moreover, one can easily 
prove the following
\begin{prop}\label{PropHomog}
The generalized intensive variables \eqref{eq:GenIntQuantQH} are quasi-homogeneous functions of degree 0.
\end{prop}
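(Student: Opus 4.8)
The plan is to show directly from Definition~\ref{def.GIV} that each $\tilde{p}_i$ satisfies the scaling relation \eqref{eq:Gquasihom} with degree $0$, i.e.~that $\tilde{p}_i(\lambda^{\beta_j} q^j) = \lambda^{0}\,\tilde{p}_i(q^j) = \tilde{p}_i(q^j)$. The generalized intensive variable is a product of two factors, namely the prefactor $[(q^i)^{\beta_i-r}]^{1/\beta_i}$ and the would-be intensive quantity $p_i(q^j)$, so the natural strategy is to track the homogeneity degree of each factor separately and then add them, since under a rescaling $q^j \mapsto \lambda^{\beta_j} q^j$ a product of quasi-homogeneous quantities picks up the sum of the individual scaling exponents.

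First I would compute how the prefactor scales. Under the substitution $q^i \mapsto \lambda^{\beta_i} q^i$ we have $(q^i)^{\beta_i - r} \mapsto \lambda^{\beta_i(\beta_i - r)} (q^i)^{\beta_i - r}$, and raising to the power $1/\beta_i$ (which is legitimate precisely because of the standing assumption $\beta_i \neq 0$) gives a factor $\lambda^{\beta_i - r}$. Second, I would invoke the already-established result \eqref{hombeta_a-1}, which states that $p_i$ is quasi-homogeneous of degree $r - \beta_i$, so that $p_i(\lambda^{\beta_j} q^j) = \lambda^{r - \beta_i} p_i(q^j)$.

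The final step is to multiply these two scalings together:
\beq
\tilde{p}_i(\lambda^{\beta_j} q^j) = \lambda^{\beta_i - r}\,\lambda^{r - \beta_i}\,\tilde{p}_i(q^j) = \lambda^{0}\,\tilde{p}_i(q^j)\,,
\eeq
so the two exponents cancel exactly and $\tilde{p}_i$ is quasi-homogeneous of degree $0$, as claimed. This holds for every $i$, which establishes the proposition.

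There is no genuine obstacle here, since the result is essentially a bookkeeping exercise in exponents and the key nontrivial input (the homogeneity degree $r-\beta_i$ of the $p_i$) is already furnished by \eqref{hombeta_a-1}. The only point requiring mild care is the well-definedness of the fractional power $[(q^i)^{\beta_i - r}]^{1/\beta_i}$, which is exactly why Definition~\ref{def.GIV} imposes $\beta_i \neq 0$; I would flag that the expression is to be read on the domain where $q^i > 0$ so that the noninteger power is unambiguous, and otherwise the manipulation of exponents is entirely formal.
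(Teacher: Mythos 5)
Your proof is correct and follows essentially the same route as the paper: scale the prefactor $[(q^i)^{\beta_i-r}]^{1/\beta_i}$ to pick up $\lambda^{\beta_i-r}$, invoke \eqref{hombeta_a-1} for the factor $\lambda^{r-\beta_i}$ from $p_i$, and observe that the exponents cancel. The remark about $\beta_i\neq 0$ and the domain of the fractional power is a sensible addition but does not change the argument.
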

\begin{proof}
\begin{align}
	{\tilde{p}_i(\lambda^{\beta_j} q^j)} 
	&= \left[\left(\lambda^{\beta_i} q^i\right)^{\beta_i-r}\right]^{1/\beta_i} p_i (\lambda^{\beta_j}q^j)\nonumber\\
	&=\lambda^{\beta_i - r} \left[\left(q^i\right)^{\beta_i-r}\right]^{1/\beta_i} 
		\big[\lambda^{r-\beta_i} p_i(q^j)\big]	\nonumber\\
	&=\left[\left(q^i\right)^{\beta_i-r}\right]^{1/\beta_i} p_i (q^j) = {\tilde{p}_i(q^j)} \,.
\end{align}
\end{proof}

This is a desirable property for quantities defining a notion of equilibrium {as they remain invariant under a scaling of the system. 
Note that these generalized variables {could have been} inferred from Eq.~(75) in~\cite{belgiorno2003quasi}. 
However, in that work they were not singled out nor were advocated as the {correct} ones to describe equilibrium. 

Using the generalized intensive variables~\eqref{eq:GenIntQuantQH}, 
we can re-write the GGD identity~\eqref{eq:GD-QH} as in~\cite{belgiorno2003quasi}}:
\begin{prop}\label{prop.GGD} 
Let $S(q^{1},\dots,q^{n})$ be a quasi-homogeneous function of  degree $r$ and type $\beta$ and let 
$\left\{{\tilde{p}_i} \right\}_{i=1}^n$ be the set of generalized intensive variables. Then, 
	\beq\label{GDnew}
	\sum_{i=1}^n \beta_i \left(q^i\right)^{r/\beta_i}  \d \tilde{p}_i(q^j) = 0\,.
	\eeq
\end{prop}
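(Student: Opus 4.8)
The plan is to reduce the claim to the generalized Gibbs--Duhem identity \eqref{eq:GD-QH}, which has already been established, by showing that each summand of \eqref{GDnew} is nothing but the corresponding summand of \eqref{eq:GD-QH}. Thus no new structural input is needed beyond differentiating the definition \eqref{eq:GenIntQuantQH} of the generalized intensive variables and keeping careful track of the powers of $q^i$.

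First I would differentiate the definition \eqref{eq:GenIntQuantQH}, writing the exponent compactly as $(\beta_i-r)/\beta_i$. The product rule produces a term proportional to $p_i\,\d q^i$ with prefactor $\tfrac{\beta_i-r}{\beta_i}(q^i)^{(\beta_i-r)/\beta_i-1}$ and a term proportional to $\d p_i$ with prefactor $(q^i)^{(\beta_i-r)/\beta_i}$. Next I would multiply $\d\tilde{p}_i$ by the prefactor $\beta_i(q^i)^{r/\beta_i}$ appearing in \eqref{GDnew} and simplify the exponents. The decisive observation is the identity $r/\beta_i+(\beta_i-r)/\beta_i=1$: in the first term the power of $q^i$ collapses to $r/\beta_i+(\beta_i-r)/\beta_i-1=0$, while the factor $\beta_i$ cancels the denominator in $\tfrac{\beta_i-r}{\beta_i}$ to leave $(\beta_i-r)$; in the second term the power of $q^i$ reduces to $1$. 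One therefore obtains
\beq
\beta_i(q^i)^{r/\beta_i}\,\d\tilde{p}_i = (\beta_i-r)\,p_i(q^j)\,\d q^i + \beta_i q^i\,\d p_i(q^j)\,,
\eeq
which is exactly the $i$-th summand appearing in \eqref{eq:GD-QH}.

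Finally I would sum this relation over $i$ and invoke the already-proven GGD identity \eqref{eq:GD-QH}, whose left-hand side vanishes, to conclude that $\sum_{i=1}^n\beta_i(q^i)^{r/\beta_i}\,\d\tilde{p}_i(q^j)=0$. The assumption $\beta_i\neq 0$ from Definition~\ref{def.GIV} guarantees that every exponent is well defined throughout. I do not anticipate any genuine obstacle: the whole content of the proposition is the exponent collapse $r/\beta_i+(\beta_i-r)/\beta_i-1=0$, so the only place that demands real care is tracking these powers of $q^i$ consistently across the two terms of the product rule.
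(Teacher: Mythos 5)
Your proposal is correct and is essentially the paper's own argument run in the opposite direction: the paper substitutes $p_i=\tilde{p}_i\,(q^i)^{r/\beta_i-1}$ into \eqref{eq:GD-QH} and expands, while you differentiate \eqref{eq:GenIntQuantQH} and multiply by $\beta_i(q^i)^{r/\beta_i}$ to recover the $i$-th summand of \eqref{eq:GD-QH}; both reduce to the same product-rule computation and exponent cancellation, and both invoke the already-proven GGD identity to conclude.
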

\begin{proof}
From \eqeqref{eq:GenIntQuantQH} we have 
	\beq
	  p_{i}(q^j) = {\tilde{p}}_i(q^{j}) \, {\left(q^i\right)}^{r/\beta_i -1} \,,
	\eeq
and we can thus rewrite the identity \eqref{eq:GD-QH} in terms of the $\tilde{p}_i(q^j)$ as 
	\beq\label{G-D3quasi}
	\begin{split} 
		\sum_{i=1}^n \beta_i \Big[ \left(1 - \frac{r}{\beta_i} \right) {\tilde{p}}_i \,{\left(q^i\right)}^{r/\beta_i - 1}\, 
			\d q^i 
		+ q^i \d\left( {\tilde{p}}_i \,{\left(q^i\right)}^{r/\beta_i - 1} \right) \Big] = 0\,.
	\end{split}
	\eeq
By explicit calculation of the second term, we can rewrite the above identity as  
	\begin{eqnarray}
	  0 &=& \sum_{i=1}^n \beta_i \Bigg[ \left( 1 - \frac{r}{\beta_i} \right) {\tilde{p}}_i \, 
			{\left(q^i\right)}^{r/\beta_i - 1} \, \d q^i  \nonumber \\
	  &+& q^i \Big[ {\left(q^i\right)}^{r/\beta_i-1}\,\d {\tilde{p}}_i+ \left( \frac{r}{\beta_i} -1 \right) 
			{\left(q^i\right)}^{r/\beta_i-2} \,{\tilde{p}}_i \,\d q^i \Big] \Bigg] \nonumber\\
	  &=& \sum_{i=1}^n \beta_i\, {\left(q^i\right)}^{r/\beta_i} \d \tilde{p}_i \,. \label{G-D4quasi}
	\end{eqnarray}
 \end{proof}

Note that the GGD identity~\eqref{eq:GD-QH} only 
establishes the existence of a relation between the would-be 
intensive and the would-be extensive variables, without fixing the 
values of the generalized intensive variables uniquely. In this 
sense our choice of the generalized intensive variables~\eqref{eq:GenIntQuantQH} is not the only one possible. 
However, it is motivated by the following considerations. Firstly,~\eqref{eq:GenIntQuantQH} reduce to~\eqref{def.p0} when the entropy is homogeneous of degree $1$. 
Moreover, these quantities are quasi-homogeneous functions of degree $0$, thus being true intensive variables 
(under the appropriate re-scalings of the extensive ones). 
Finally, as stated in Proposition~\ref{prop.GGD}, using these variables the GGD identity takes the same form as the standard one (cf.~\cite{belgiorno2003quasi}).
Indeed, Propositions~\ref{PropHomog} and~\ref{prop.GGD} suggest the following modification of the notion of 
thermodynamic equilibrium: 

\begin{definition}[Thermodynamic Equilibrium] \label{def.TE}
Two systems whose entropy is a quasi-homogeneous function of the same degree {and type} are in thermodynamic equilibrium 
with each other if and only if they have the same values of the ${\tilde{p}}_i(q^j)$.
\end{definition}

This is the \emph{generalized zeroth law of thermodynamics} that we propose for any quasi-homogeneous system. 
Note that Def.\,\ref{def.TE} is mathematically consistent with the identity~\eqref{eq:GD-QH} 
 -- cf.~\eqref{GDnew} -- even when considering processes of coexistence as in the case of the usual 
Maxwell equal area law. 

Let us remark that with our prescription one can consider the example of a process of coexistence among 
different phases at equilibrium without any incongruence, as long as the definition of equilibrium is given by 
equating the quantities in~\eqref{eq:GenIntQuantQH}. Note also that our simple redefinition gives a general 
prediction about the quantities that have to be constant at equilibrium. 

In the next section we consider examples from black holes thermodynamics and show that for the Schwarzschild 
black hole our redefinition of the equilibrium condition {yields a constant generalized temperature. This result} coincides with a different instance of the generalized 
zeroth law of thermodynamics resulting from non-extensive statistical mechanics~\cite{czinner2015black}. {As a more relevant consequence we will also show that}
for the Kerr--AdS black hole our construction suggests that a reconsideration of the first order phase transition 
might be in order.

\section{Quasi-homogeneous black hole thermodynamics} 
\label{sec:ex}
In this section we investigate some examples for the above ideas in the context of black hole thermodynamics. 
In principle, our generalization of the zeroth law can be applied to any black hole system, given that one can 
easily determine the degrees of homogeneity from the Smarr relation, 
	\begin{equation}\label{smarr}
	(D-3)M = (D-2)TS+(D-2)\Omega J 
	-2PV +(D-3)\Phi Q
	\end{equation}
where $D$ is the number of spacetime dimensions, $M$ is the mass of the black hole, $T$ is the Hawking 
temperature, $S$ is the entropy and the other terms are work terms depending on the black hole family in 
question~\cite{dolan2015black}. Here, we consider two in particular, namely the Schwarzschild and the Kerr--AdS 
black holes, to compare our results with previous proposals and to illustrate new features.

\subsection{Schwarzschild}
The Schwarzschild black hole is the most straightforward example, since its thermodynamics is described by 
only one extensive variable, i.e.,~its mass $M$. The entropy as a function of $M$ is 
\begin{equation}
  S(M) = 4 \pi M^2 \,,
\end{equation}
which is a homogeneous function of degree $r=2$. From this the standard temperature is derived as 
\begin{equation}\label{TSchwnormal}
  \frac{1}{T} = \frac{\partial S}{\partial M} = 8 \pi M \,.
\end{equation}
It is immediate to see that this a homogeneous function of degree $1$ with respect to $M$, and therefore not a real intensive quantity. 
With~\eqref{TSchwnormal} 
and using \eqref{eq:GenIntQuantQH}, we can obtain the generalized temperature as 
\begin{equation} \label{eq:TgenSch}
  \tilde{T} = TM = \frac{1}{8\pi} \,,
\end{equation}
i.e.,~a constant. Note that, a constant is -- trivially -- a real intensive quantity, as it does not change 
with any scaling of $M$. 
Note also that by \eqref{GDnew} the generalized intensive quantities cannot be independent. This means 
that in this case, since there is only one such generalized intensive quantity, it must be a constant. This 
fact outlines that the Schwarzschild black hole is not a proper thermodynamic system. However, it is 
interesting to see that even in this case our formalism coincides with previous approaches. Indeed, 
a similar result, i.e.,~a constant generalized temperature, has been obtained previously for the 
Schwarzschild black hole~\cite{czinner2015black} by using the generalized zeroth law derived from non-extensive 
statistical mechanics proposed in~\cite{biro2011zeroth}. In this work, the most general conditions for thermal 
equilibrium of systems with non-additive energy and entropy are established by using a method based on the 
definition of the so-called formal logarithms of these quantities. However, the same method was also applied 
in~\cite{czinner2016kerr} in the analysis of the Kerr black hole, resulting in a constant generalized temperature, 
regardless of the angular momentum, identical to the Schwarzschild case -- an indication that the result may be 
unphysical, as the authors point out themselves. 
Moreover, from our formalism a dependence of the generalized temperature on the angular momentum is to be expected. 
Finally, in~\cite{biro2013q,czinner2015RenyiKerr} the R\'enyi entropy was used as the formal logarithm of the Bekenstein-Hawking entropy. 
In this case the temperature for the Schwarzschild case depends on the mass $M$ and 
is not intensive. The connection of our proposal to these approaches and the general question of the underlying 
behaviour of the energy and entropy is thus not quite clear and might be addressed in future works.

\subsection{Kerr--AdS}
Kerr black holes in asymptotically Anti--de Sitter space are thermodynamically determined by three extensive variables, 
namely their mass $M$, angular momentum $J$ and pressure $P$, which is defined via the cosmological constant $\Lambda$ 
of the spacetime as 
\begin{equation}
 P = -\frac{\Lambda}{8 \pi} \,.
\end{equation}
The cosmological constant is usually included as a pressure into the thermodynamic description of black holes 
\cite{2009Kast,2012Dola,dolan2015black}, and thus it turns out that the internal energy of the black hole is 
\begin{equation}
  U = M - PV \,,
\end{equation}
and therefore the mass of the black hole is identified with the enthalpy
\begin{equation}
  M \equiv H = U + PV \,.
\end{equation}
For the Kerr--AdS black hole one obtains~\cite{2012Dola}
\begin{equation}
  H(S,P,J) = \frac{1}{2} \sqrt{\frac{4 \pi ^2 J^2 \left(\frac{8 P S}{3}+1\right)+\left(\frac{8 P S^2}{3}+S\right)^2}{\pi S}} \,,
\end{equation}
and from this, provided $J \neq 0$, it is possible to calculate the expression for the internal energy as 
\begin{eqnarray}
  && U(S,V,J) = \left(\frac{\pi}{S}\right)^3 \Bigg[ \left( \frac{3V}{4\pi} \right) \left\{\frac{S^2}{2\pi^2} + J^2 
    \right\} ~~~~~~~~~~ \nonumber\\
   &&~~~~~~~~~~~~~~~~~ - J^2 \left\{\left( \frac{3V}{4\pi} \right)^2 - \left( \frac{S}{\pi} \right)^3 \right\}^{1/2} \Bigg] \,.
\end{eqnarray}
For simplicity and without loss of generality, we will limit further analyses to positive angular momenta, 
i.e.,~$J>0$. 
The temperature and pressure can be easily obtained as 
	\beq
	\label{eq:temp}
	T = 	  \frac{1}{8S^4} \left[ \frac{6 \pi ^{3/2} J^2 \left(9 \pi  V^2-8 S^3\right)}{\sqrt{9 \pi  V^2-16 S^3}}  -18 \pi ^2 J^2 V-3 S^2 V \right], 
	\eeq
and
	\beq
	\label{eq:press}
	P =	 \frac{3}{8S^3} \left[ 2 \pi ^2 J \left(J-\frac{3 \sqrt{\pi } J V}{\sqrt{9 \pi  V^2-16 S^3}}\right)+S^2 \right],
	\eeq
respectively.

The case of Kerr--AdS is particularly interesting for our purposes because its equation of state, i.e.,~the relation 
$P(V,T)$ at fixed $J$, qualitatively shows the same oscillatory behaviour as a vdW fluid, which is generally taken as an 
indication of the presence of a first order phase transition, sometimes referred to as the CCK phase 
transition~\cite{dolan2014vacuum,caldarelli2000thermodynamics,altamirano2014thermodynamics}. 
To see this let us fix $J=1$ from now on and first look at \figref{fig:PoverV}, where we plot $P(V,T)$ as a function of 
$V$ for various choices of $T$, with the inlet zooming in on one of the curves to show the characteristic vdW bump. 
\begin{figure}[h!]
    \includegraphics[width=0.45\textwidth]{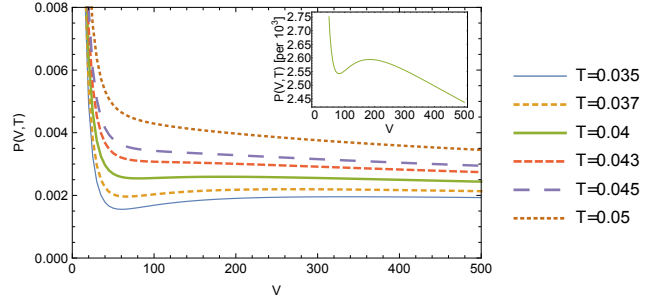}
    \caption{Equation of state $P(V,T)$ for different values of $T$ and with $J=1$.} 
		\label{fig:PoverV}
\end{figure}
The region of the bump is the area where one would apply the Maxwell equal area law in analogy to ordinary thermodynamics  
\cite{2012Dola}. A different (equivalent) way to look at such transition is by considering the graph of the Gibbs 
free energy, 
\begin{equation} \label{eq:Gtheory}
  G(T,P,J) = U - TS + PV \,.
\end{equation} 
To illustrate the multi-valued behavior of the Gibbs free energy we plot in \figsref{fig:Tconst} and \ref{fig:Pconst} 
the cuts along the lines of constant $T$ and $P$, respectively, featuring the characteristic swallowtails. 
\begin{figure}[h!]
    {\includegraphics[width=0.45\textwidth]{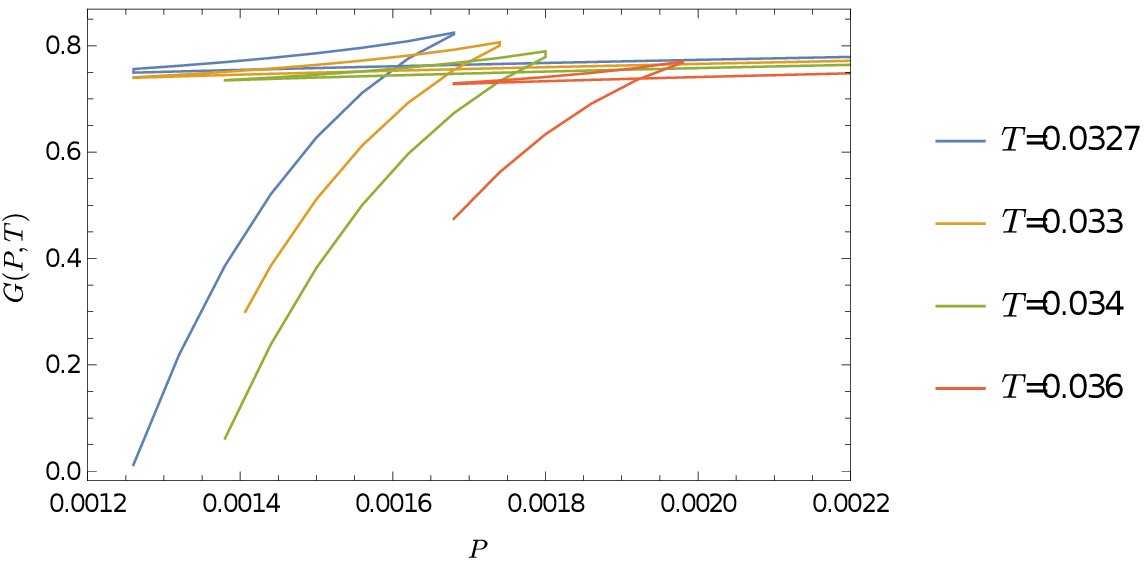}}
    \caption{Cuts of the Gibbs free energy at constant $T$.} \label{fig:Tconst}
	{\includegraphics[width=0.45\textwidth]{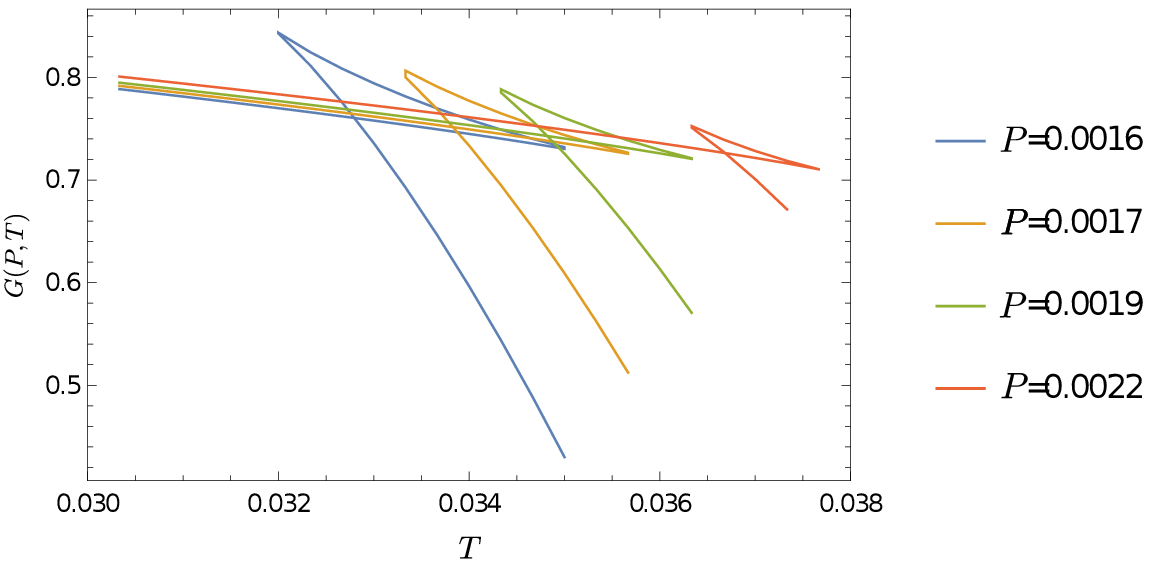}}
    \caption{Cuts of the Gibbs free energy at constant $P$.} \label{fig:Pconst}
\end{figure}

Based on the above analogy with the vdW phase diagram, it has been argued that there is a first order phase transition between
small and large Kerr--AdS black holes, for appropriate values of the temperature and pressure.
Indeed, the standard Maxwell construction can be performed, and the form of the coexistence curve can be 
calculated~\cite{dolan2014vacuum,caldarelli2000thermodynamics,altamirano2014thermodynamics,wei2016analytical}.
In the following we use this example -- which is considered to be well understood in the literature -- to claim that a revision due to the
GGD identity is called for. 

We start by showing that the standard Maxwell construction in 
this case is inconsistent with the GGD identity. 
To do so, it is more convenient to use the equation of state 
$T(S,P)$, as plotted in \figref{fig:MaxC}. 
For appropriate values of $T$ and $P$, this equation of state 
exhibits an oscillatory behavior, as for the case of $P(T,V)$ 
above (cf.~\figref{fig:PoverV}). 
	 \begin{figure}[h!]
	 \begin{center}
	    \includegraphics[width=0.45\textwidth]{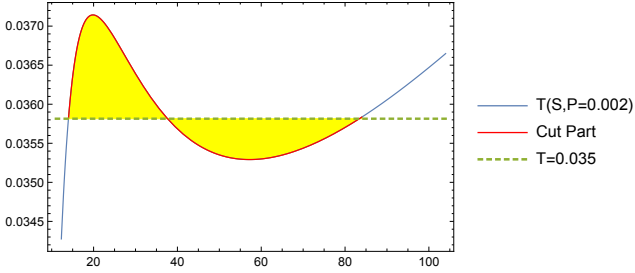}
	    \caption{Equation of state $T(S,P)$ for $J=1$ and $P=0.002$, together with the standard Maxwell construction. The two colored areas are equal.} 
		\label{fig:MaxC}
	\end{center}
	\end{figure}
The corresponding value for the transition temperature is also calculated using Maxwell's equal area law. \\
Now we proceed to verify the GGD identity for this case. 
Using the Smarr relation \eqref{smarr} applied to four spacetime 
dimensions (and $Q=0$), we have
	\begin{equation}\label{SmarrKAdS}
	  S = \frac{1}{2T} U + \frac{3}{2} \frac{P}{T} V - \frac{\Omega}{T} J \,.
	\end{equation}
From this, we can determine the degrees of homogeneity of the variables $T$, $P$ and $\Omega$ as $\beta_T = 1/2$, $\beta_P = 3/2$
and $\beta_\Omega = 1$ (cf.~Eq.~\eqref{eu.teo}). The overall degree of homogeneity of the entropy is $r=1$.
Therefore, the GGD~\eqref{eq:GD-QH} in the Kerr--AdS case reads
	\beq\label{GGDKAdS}
	-\dfrac{1}{2T}\d U+\dfrac{1}{2}\dfrac{P}{T}\d V+\dfrac{1}{2}U\d\left(\dfrac{1}{T}\right)+\dfrac{3}{2}V\d\left(\dfrac{P}{T}\right)+J\d\left(\dfrac{\Omega}{T}\right)=0 \,.
	\eeq
Note that the last three terms have an analogous form with the standard Gibbs-Duhem relation~\eqref{GDstandard} (although with different coefficients). 
Now let us use the standard Maxwell equal area law to prove an inconsistency with~\eqref{GGDKAdS}.
By the usual argument, the two coexisting phases are in equilibrium and therefore all the last three terms vanish along the coexistence process.
Thus we are left with the following expression
	\beq\label{GGDKAdS2}
	-\dfrac{1}{2T_{\rm tr}}\Delta U+\dfrac{1}{2}\dfrac{P_{\rm tr}}{T_{\rm tr}}\Delta V=0\,,
	\eeq
where $\Delta U$ and $\Delta V$ represent the jumps in these 
quantities along the coexistence line and 
$T_{\rm tr}$ and $P_{\rm tr}$ are the constant values of the temperature and pressure along the transition. 
Eq.~\eqref{GGDKAdS2} can be further simplified to
	\beq\label{GGDKAdS3}
	\Delta U-P_{\rm tr}\Delta V=T_{\rm tr}\Delta S-2P_{\rm tr}\Delta V=0\,,
	\eeq
where in the last equality we made use of the first law (with $J$ constant), that is, $\Delta U=T_{\rm tr}\Delta S-P_{\rm tr}\Delta V$.
Now it is an easy exercise to use the values 
of $T_{\rm tr}$, $P_{\rm tr}$, $\Delta S$ and $\Delta V$ calculated using the standard Maxwell construction
to show that Eq.~\eqref{GGDKAdS3} is not satisfied, i.e.,~that there is an inconsistency with the GGD identity. 
Table~\ref{table1} shows the results of these 
calculations for different values of the transition pressure 
$P_{\rm tr}$.
	In the first column we report the chosen values for the transition pressure $P_{\rm tr}$. In the second column
	we provide the corresponding transition temperature $T_{\rm tr}$, calculated using the standard Maxwell construction. In the third
	column we show that the area law is satisfied, by checking that the deviation from zero of the difference between the two areas
	in yellow in Fig.~\ref{fig:MaxC} is negligible. 
	In the last column we demonstrate that the GGD identity~\eqref{GGDKAdS3} is not satisfied, 
	by showing that the deviation from zero is large compared to that of the area law, and thus not negligible.
	\begin{table}[h!]
	\begin{center}
	\begin{tabular}{|c|c|c|c|}
	\hline 
	$P_{\rm tr}$ 		& 	$T_{\rm tr}$ 	   &  ${\rm Maxwell_{\rm Dev}}$ 	      & ${\rm GGD_{\rm Dev}}$  \\ 
	\hline 
	$1.0 \times 10^{-3}$ & $2.6 \times 10^{-2}$  & $1.8 \times 10^{-2}$ & $6.5 \times 10^{-1}$   \\ 
	\hline 
	$1.6 \times 10^{-3}$ & $3.2 \times 10^{-2}$  & $2.2 \times 10^{-2}$ & $4.6 \times 10^{-1}$  \\ 
	\hline
	$2.0 \times 10^{-3}$ & $3.5 \times 10^{-2}$  & $2.2 \times 10^{-3}$ & $3.6 \times 10^{-1}$   \\ 
	\hline 
	$2.6 \times 10^{-3}$ & $3.9 \times 10^{-2}$  & $4.5 \times 10^{-3}$ & $2.0 \times 10^{-1}$  \\ 
	\hline
	\end{tabular}
	\caption{Values of $T_{\rm tr}$, ${\rm Maxwell_{\rm Dev}}$ 	 and ${\rm GGD_{\rm Dev}}$ obtained numerically from Maxwell's equal area law for different choices of $P_{\rm tr}$. More details in the text.}
	\label{table1}
	\end{center}
	\end{table}

Since the analysis of the phase transition in terms of the standard 
definition of thermodynamic equilibrium leads to an inconsistency 
with the GGD identity, we now reconsider the phase transition in 
terms of the generalized intensive quantities defined in~\eqref{eq:GenIntQuantQH}.
From~\eqeqref{eq:GenIntQuantQH} and~\eqref{SmarrKAdS}, we can infer the generalized intensive variables responsible for equilibrium as
\begin{equation}
  \frac{1}{\tilde{T}} = \frac{1}{T U} \quad {\rm and} \quad  \frac{\tilde{P}}{\tilde{T}} = \frac{P}{T} V^{1/3} \,.
\end{equation}
Combining the two expressions, we end up with the generalized thermodynamic equilibrium parameters 
\begin{equation} 
  \tilde{T} = TU  \quad {\rm and} \quad  \tilde{P} = P U V^{1/3} \,.
\end{equation}
By construction, these functions are quasi-homogeneous of degree $0$ and type $\beta = (1,3/2,1)$ 
with respect to the correspondingly re-scaled 
extensive variables $S$, $V$ and $J$, i.e.,~
\begin{eqnarray}
   \tilde{T}(\lambda^{1} S, \lambda^{3/2} V, \lambda^{1} J) &=& \lambda^0 \,\tilde{T}(S,V,J) \,, \\
   \tilde{P}(\lambda^{1} S, \lambda^{3/2} V, \lambda^{1} J) &=& \lambda^0 \,\tilde{P}(S,V,J) \,.
\end{eqnarray}
In terms of $S$ and $V$ (for $J=1$) these read
	\begin{align}
	\label{eq:Ttilde}
	\tilde{T}(S,V) =	&  \frac{3V\pi^{13/2}}{8S^7} \left[ -36 \left( \frac{S}{\pi} \right)^3 - 10 \left( \frac{S}{\pi} \right)^5 
     + 27 \left( \frac{V}{\pi} \right)^2 +\right. \nonumber\\
     				& \left.9 \left( \frac{S}{\pi} \right)^2 \left( \frac{V}{\pi} \right)^2 \right]+ \frac{3\pi^6}{64S^7} \sqrt{9\pi V^2 - 16S^3}\ \times \nonumber\\
				& \left[ 32 \left( \frac{S}{\pi} \right)^3 - 72 \left( \frac{V}{\pi} \right)^2 
      - 24 \left( \frac{S}{\pi} \right)^2 \left( \frac{V}{\pi} \right)^2 - 3\left( \frac{S}{\pi} \right)^4 
      \left( \frac{V}{\pi} \right)^2 \right]
      \end{align}
and
	\begin{align}
	\label{eq:Ptilde}
       \tilde{P}(S,V) = &\frac{9 \pi^{5/2} V^{4/3}}{32 S^6} \sqrt{9\pi V^2 - 16S^3} \left( 2\pi^2 + S^2 \right) \ \times \nonumber\\
     				& \left( 6\pi^2 V + 3 S^2 V - 2\pi^{3/2} \sqrt{9\pi V^2 - 16S^3} \right) 
	\end{align}

Note that in order to show the quasi-homogeneity of these functions by rescaling the extensive variables, 
it is necessary to recover the terms containing $J$, including it as an extensive variable. 
Using these expressions, we can return to the plot of the equation of state, but now in terms of the new 
variables, plotting $\tilde{P}(V,\tilde{T})$ as a function of $V$ for different choices of $\tilde{T}$. 
As can be seen in \figref{fig:PtildeoverV}, the curves are monotonously decreasing, therefore the system appears 
to be stable and there is no necessity for the Maxwell construction. 
\begin{figure}[h!]
    {\includegraphics[width=0.45\textwidth]{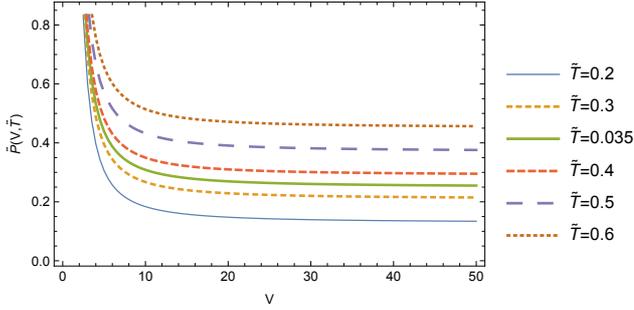}}
        \caption{Equation of state $\tilde{P}(V,\tilde{T})$ at constant $\tilde{T}$.} \label{fig:PtildeoverV}
\end{figure}
The same effect can be observed using the Gibbs free energy. We can re-express definition \eqref{eq:Gtheory} 
in terms of the new intensive variables $\tilde{T}$ and $\tilde{P}$ and calculate the function 
$G(\tilde{T},\tilde{P},J)$, inverting \eqsref{eq:Ttilde} and \eqref{eq:Ptilde} numerically. The result can be seen in 
figures \figsref{fig:Ttildeconst} and \ref{fig:Ptildeconst}, where cuts at constant $\tilde{T}$ and $\tilde{P}$ show that 
the Gibbs free energy in terms of the generalized intensive 
variables is a single-valued smooth function. \\



\begin{figure}
    {\includegraphics[width=0.45\textwidth]{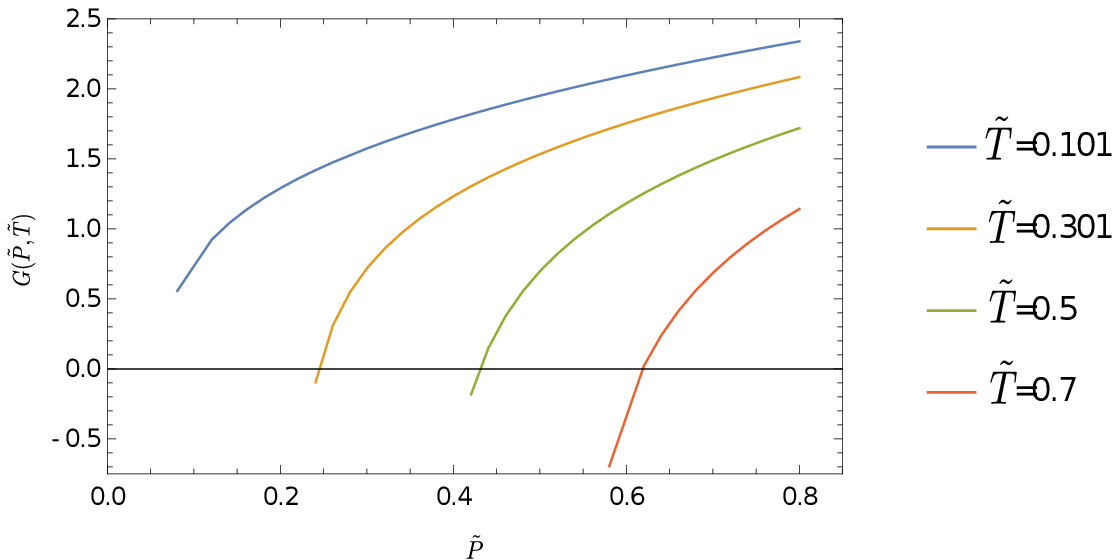}}
        \caption{Cuts  of the Gibbs free energy at constant $\tilde{T}$.} \label{fig:Ttildeconst}
	{\includegraphics[width=0.45\textwidth]{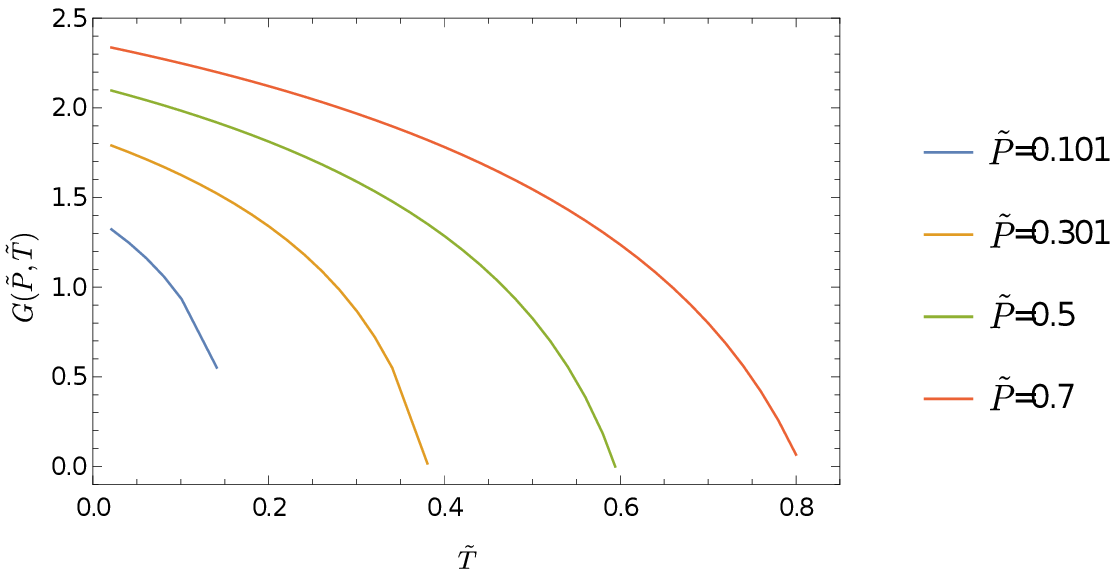}}
        \caption{Cuts of the Gibbs free energy at constant $\tilde{P}$.} \label{fig:Ptildeconst}
\end{figure}

We conclude that for the Kerr--AdS black hole the standard Maxwell 
equal area law is inconsistent with the GGD identity. Besides,
the use of the generalized intensive variables proposed here as the parameters defining thermodynamic equilibrium
seems to indicate that there is no first order phase transition between large and small black holes, as previously 
argued in the literature. However, our results deserve more investigation. Perhaps a comparison with explicit models 
directly constructed from statistical mechanics could shed more light on the validity of such statements. Alternatively, 
an analysis involving thermodynamic response functions could be interesting, although the significance of these 
response functions in the context of a generalized zeroth law should be re-evaluated.

\section{Conclusions and future directions}
\label{sec:conc}
In this work we consider a generalization of the zeroth law of thermodynamics for systems whose thermodynamic entropy 
is a quasi-homogeneous function of the (would-be) extensive variables (Def.\,\ref{def.TE}). Originating from the generalized 
version of the Gibbs-Duhem identity, we show how to define the generalized intensive variables that can be used to define 
thermodynamic equilibrium in such general cases (Def.\,\ref{def.GIV}). Moreover, we prove that this new definition 
resolves an inconsistency between the use of the standard Maxwell equal area law and the GGD identity 
that is usually overlooked, especially in the literature regarding the thermodynamics of black holes. Within this context, 
we consider two examples where the application of our generalized zeroth law should be relevant, namely the Schwarzschild 
and the Kerr--AdS black holes. 
The former is important because with our approach we recover a previous result found in~\cite{czinner2015black}, 
derived from a different perspective. The latter example is of interest because in the usual treatment the Kerr--AdS 
family of black holes shows a behavior which is very similar to that of a van der Waals fluid, including a first order 
phase transition. However, we argue that the use of the standard Maxwell equal area law in such case is not fully consistent and that
using the generalized intensive variables that we have introduced here in order 
to define thermodynamic equilibrium, such phase transition disappears. This statement however should be further 
investigated in other contexts in order to corroborate such a conclusion. 

Our results are intended to be a step forward towards a deeper 
formal understanding of the thermodynamic properties of systems 
with quasi-homogeneous entropy. However, it also calls for more 
detailed investigations. 
One can use the arguments given here to understand whether other 
reported first order phase transitions in black holes are consistent 
with their respective GGD identities or not (cf.~e.g.~\cite{kubizvnak2012p,spallucci2013maxwell,spallucci2013maxwell?s,dolan2014vacuum,lan2015note,wei2015clapeyron,wei2015insight,mo2015coexistence,kubizvnak2017black}).
It would also be interesting to study the implications of the 
present analysis for the conditions of equilibrium between black 
holes and heat reservoirs, e.g.~a Schwarzschild black hole in a hot 
flat space. Moreover, we would 
like to extend the comparison between our approach and the one presented in~\cite{biro2011zeroth,czinner2015black,czinner2016kerr,
biro2013q,czinner2015RenyiKerr} to other cases to see whether the agreement we found for the Schwarzschild black hole 
holds in more general contexts. 
Besides, it would be worth using explicit calculations as in~\cite{ramirez2008violation,
ramirez2008systems} to check whether our prediction of the new thermodynamic parameters defining equilibrium can be tested 
by numerical experiments, and to compare our results with the formalism proposed in~\cite{PhysRevE.88.042135,PhysRevLett.114.230601} 
presenting a different instance of a GGD relation for systems with long-range interactions. 
These directions will be the subject of future work.\\

\section*{Acknowledgements}
A.B. was supported by a DGAPA--UNAM postdoctoral fellowship. C.G. was supported by an UNAM postdoctoral fellowship program. 
F.N. received support from PAPIIT-UNAM Grant IN-111617.

\bibliographystyle{ieeetr}
\bibliography{references}

\end{document}